\newcommand{\infer}[3][]{\dfrac{#2}{#3}\;{\scriptstyle\mathsf{#1}}}
\lstdefinelanguage{Lean}{
  keywords={def, theorem, lemma, inductive, structure, class, instance, where,
            if, then, else, match, with, let, in, have, show, by, fun,
            namespace, open, variable, axiom, example, import, Prop, Type, Sort},
  keywordstyle=\color{blue}\bfseries,
  ndkeywords={Nat, Bool, Term, Star, Plus, Joinable, Diamond, Confluent,
              Terminating, LocalConfluent, ParRed, BetaStep, HasType, SN, Ty},
  ndkeywordstyle=\color{purple},
  sensitive=true,
  comment=[l]{--},
  morecomment=[s]{/-}{-/},
  commentstyle=\color{gray}\itshape,
  stringstyle=\color{red},
  morestring=[b]"
}
\newcommand{\lean}[1]{\lstinline[language=Lean]|#1|}
\newcommand{\redarrow}{\rightarrow}
\newcommand{\redstar}{\rightarrow^*}
\newcommand{\parred}{\Rightarrow}
\newcommand{\join}{\downarrow}
\begin{document}

\title{A Modular Lean~4 Framework for Confluence and Strong Normalization of Lambda Calculi with Products and Sums}

\titlerunning{Confluence and Normalization in Lean 4}

\author{Arthur F. Ramos\inst{1} \and
        Anjolina G. de Oliveira\inst{2} \and
        Ruy J. G. B. de Queiroz\inst{2} \and
        Tiago M. L. de Veras\inst{3}}

\authorrunning{A. F. Ramos et al.}

\institute{
Microsoft, Tampa, FL, USA\\
\email{arfreita@microsoft.com}
\and
Centro de Inform\'{a}tica, Universidade Federal de Pernambuco,\\
Recife, PE, Brazil\\
\email{\{ago,ruy\}@cin.ufpe.br}
\and
Departamento de Matem\'{a}tica, Universidade Federal Rural de Pernambuco,\\
Recife, PE, Brazil\\
\email{tiago.veras@ufrpe.br}
}

\maketitle

\begin{abstract}
We present \textsc{Metatheory}, a comprehensive library for programming language
foundations in Lean~4, featuring a modular framework for proving confluence of
abstract rewriting systems. The library implements three classical proof techniques---the
diamond property via parallel reduction, Newman's lemma for terminating systems,
and the Hindley-Rosen lemma for unions of relations---within a single generic
framework that is instantiated across six case studies: untyped lambda calculus,
combinatory logic, simple term rewriting, string rewriting, simply typed
lambda calculus (STLC), and STLC extended with products and sums. All theorems
are fully mechanized with zero axioms or \lean{sorry} placeholders. The de~Bruijn
substitution infrastructure, often axiomatized in similar developments, is completely
proved, including the notoriously tedious substitution composition lemma. We demonstrate
strong normalization via logical relations for both STLC and its extension with
products ($A \times B$) and sums ($A + B$), with the latter requiring a
careful treatment of the \texttt{case} elimination form in our Lean~4 formalization. To our knowledge, this is the
first comprehensive confluence and normalization framework for Lean~4.

\keywords{Confluence \and Church-Rosser theorem \and Abstract rewriting systems \and
Lean~4 \and Lambda calculus \and Strong normalization \and Products and sums}
\end{abstract}

\section{Introduction}
\label{sec:intro}

Confluence is a fundamental property of rewriting systems, guaranteeing that the
order of reductions does not affect the final result. The Church-Rosser theorem
for lambda calculus~\cite{church1936} established that $\beta$-reduction is
confluent, ensuring that every term has at most one normal form. This property
is essential for programming language semantics: it guarantees that evaluation
order does not change program meaning and that type systems are coherent.

Beyond confluence, \emph{strong normalization} ensures that all reduction
sequences terminate---a property that holds for well-typed terms in the simply
typed lambda calculus. Together, confluence and normalization provide the
foundation for reasoning about type-theoretic languages: they ensure that
type checking is decidable and that types serve as meaningful specifications.

Despite decades of study, formalizing these results remains challenging.
The standard confluence techniques---parallel reduction~\cite{takahashi1995},
Newman's lemma~\cite{newman1942}, decreasing diagrams~\cite{vanOostrom1994}---have
distinct proof patterns and preconditions.
Strong normalization requires logical relations~\cite{tait1967}, involving
subtle reasoning about term structure and reduction.

Moreover, the underlying infrastructure for lambda calculus with de~Bruijn
indices~\cite{debruijn1972} requires numerous technical lemmas about shifting
and substitution. These lemmas are notoriously tedious to prove and are often
axiomatized~\cite{aydemir2008} or avoided through named representations or
locally nameless encodings. We prove all such lemmas completely.

\paragraph{Contributions.} We present \textsc{Metatheory}, a Lean~4 library that:
\begin{enumerate}
\item Provides a \textbf{generic framework} for abstract rewriting systems (ARS)
      with reusable definitions and three fully mechanized meta-theorems
      (Section~\ref{sec:framework});
\item Demonstrates \textbf{three proof techniques} for confluence---diamond property,
      Newman's lemma, and Hindley-Rosen---instantiated across multiple systems
      (Section~\ref{sec:casestudies});
\item Includes \textbf{complete de~Bruijn infrastructure} with all substitution
      lemmas proved, including the $\sim$90-line proof of substitution composition
      (Section~\ref{sec:lambda});
\item Provides \textbf{strong normalization} for STLC via Tait's method~\cite{tait1967}
      with logical relations, fully mechanized (Section~\ref{sec:stlc});
\item Extends to \textbf{STLC with products and sums}, requiring careful
      treatment of the \texttt{case} construct in reducibility proofs (Section~\ref{sec:stlcext});
\item Achieves \textbf{zero axioms}: all theorems across 10,367 lines of Lean~4
      are complete proofs.
\end{enumerate}

\paragraph{Why Lean~4?} While formalizations exist in Coq~\cite{blanqui2006color}
and Isabelle~\cite{nipkow2001}, Lean~4~\cite{demoura2021lean4} offers a modern
type theory with excellent metaprogramming, fast compilation, and growing adoption.
No comprehensive confluence framework previously existed for Lean~4.

\section{Preliminaries: Abstract Rewriting Systems}
\label{sec:prelim}

An \emph{abstract rewriting system} (ARS) is a pair $(A, \redarrow)$ where $A$ is
a set and ${\redarrow} \subseteq A \times A$ is a binary relation.

\begin{definition}[Key Properties]
\begin{itemize}
\item \emph{Joinable}: $a \join b \iff \exists c.\, a \redstar c \land b \redstar c$
\item \emph{Diamond} (= \emph{Locally Confluent}): $\forall a,b,c.\, a \redarrow b \land a \redarrow c \implies b \join c$
\item \emph{Confluent}: $\forall a,b,c.\, a \redstar b \land a \redstar c \implies b \join c$
\item \emph{Terminating}: $\redarrow$ is well-founded (no infinite sequences)
\end{itemize}
\end{definition}

\noindent
Note: Our \emph{Diamond} property coincides with \emph{local confluence} as typically
stated in Newman's lemma---both require that single-step divergence can be joined
(in zero or more steps). This differs from the stricter ``one-step diamond'' where
the join must also be in single steps.

In Lean~4:
\begin{lstlisting}
inductive Star (r : a -> a -> Prop) : a -> a -> Prop where
  | refl : Star r a a
  | tail : Star r a b -> r b c -> Star r a c

def Diamond (r : a -> a -> Prop) : Prop :=
  forall a b c, r a b -> r a c -> Joinable r b c

def Confluent (r : a -> a -> Prop) : Prop :=
  forall a b c, Star r a b -> Star r a c -> Joinable r b c
\end{lstlisting}

\section{Generic Framework}
\label{sec:framework}

Our framework provides three meta-theorems for proving confluence.

\subsection{Diamond Property Implies Confluence}

\begin{theorem}[\lean{confluent_of_diamond}]
$\mathit{Diamond}(r) \implies \mathit{Confluent}(r)$
\end{theorem}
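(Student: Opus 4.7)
The plan is to follow the classical two-stage tiling argument for confluence. First I would prove an auxiliary \emph{strip lemma}: whenever $a \redarrow b$ and $a \redstar c$, there exists $d$ with $b \redstar d$ and $c \redarrow d$. The proof proceeds by induction on the derivation of $a \redstar c$. In the reflexive base case $c = a$, so taking $d := b$ works and the original hypothesis $a \redarrow b$ itself serves as $c \redarrow d$. In the inductive \lean{tail} case $a \redstar c_1 \redarrow c$, the induction hypothesis supplies $d_1$ with $b \redstar d_1$ and $c_1 \redarrow d_1$; then applying the \lean{Diamond} hypothesis at $c_1$ to the pair of steps $c_1 \redarrow d_1$ and $c_1 \redarrow c$ closes a new diamond at some $d$, and appending $d_1 \redarrow d$ onto $b \redstar d_1$ yields the required $b \redstar d$.

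With the strip lemma in hand, the main theorem follows by a second induction, this time on $a \redstar c$. The reflexive case is again immediate (take $d := b$). In the \lean{tail} case $a \redstar c_1 \redarrow c$, the induction hypothesis applied to the pair $a \redstar b$ and $a \redstar c_1$ yields $d_1$ with $b \redstar d_1$ and $c_1 \redstar d_1$; the strip lemma applied to the fresh single step $c_1 \redarrow c$ together with the multi-step $c_1 \redstar d_1$ then produces $d$ with $c \redstar d$ and $d_1 \redarrow d$. Transitivity of $\redstar$ (i.e.\ \lean{Star.tail} composed with $b \redstar d_1$) assembles the final confluence square.

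The main obstacle I expect is technical bookkeeping rather than deep insight: two-dimensional tiling proofs must thread existential witnesses through nested inductions, and in Lean the \lean{Star.tail} recursion has to align with each induction motive. The sharpest step is the strip lemma, since it is precisely where the \lean{Diamond} hypothesis is consumed and where the ``single step'' side of the diamond must be preserved at its output. Any weakening of the strip lemma's conclusion to a multi-step on both sides would collapse the second induction, since its next invocation would no longer have a single step to push through the diagram; keeping the asymmetry intact is what makes the whole scheme compose.
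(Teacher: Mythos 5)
Your proof follows the paper's own route exactly: the paper's helper lemma \lean{diamond_strip} is precisely your strip lemma (one side kept to a single step), and the main theorem is the same outer induction on the star, so structurally there is nothing to change.

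One point deserves attention, though, because it is where your argument either succeeds or fails depending on which \lean{Diamond} is meant. In the \lean{tail} case of your strip lemma you apply the \lean{Diamond} hypothesis to $c_1 \redarrow d_1$ and $c_1 \redarrow c$ and then ``append $d_1 \redarrow d$'' while handing a \emph{single} step $c \redarrow d$ back out; this requires the strict one-step diamond ($\exists d.\ d_1 \redarrow d \land c \redarrow d$). The paper's displayed definition, however, has \lean{Diamond} return only \lean{Joinable}, i.e.\ $d_1 \redstar d$ and $c \redstar d$, and the accompanying note insists this is deliberately the weaker local-confluence reading. Under that literal definition your single-step invariant cannot be threaded through the induction, and no alternative bookkeeping can rescue it: local confluence does not imply confluence for an arbitrary ARS (e.g.\ $b \leftarrow a \rightleftarrows c \rightarrow d$ with $b$, $d$ distinct normal forms is locally confluent but not confluent). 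So your proof is the correct---and evidently intended---one for the genuine one-step diamond, which is what \lean{confluent_of_diamond} must actually assume for the Lean proof to close; you have already identified in your last paragraph that preserving the single step is the load-bearing part, and it is worth stating explicitly that this forces the hypothesis to be the strict diamond rather than the \lean{Joinable}-valued version printed in Section~\ref{sec:prelim}.
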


The proof proceeds by induction on $a \redstar b$, using the diamond property
to ``strip'' one step at a time via a helper lemma \lean{diamond_strip}.

\subsection{Newman's Lemma}

For terminating systems, local confluence suffices:

\begin{theorem}[Newman's Lemma]
Termination and local confluence imply confluence.
\end{theorem}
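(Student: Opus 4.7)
The plan is to prove Newman's lemma by well-founded induction on the termination of $\redarrow$. The outer claim to establish is: for every $a$, whenever $a \redstar b$ and $a \redstar c$, one has $b \join c$. Termination of $\redarrow$ supplies a well-founded induction principle on $a$; the induction hypothesis asserts that the confluence property holds at every $a'$ with $a \redarrow a'$.

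Fix $a$ and a divergence $a \redstar b$, $a \redstar c$. If either reduction is empty (the \lean{refl} constructor of \lean{Star}), then one endpoint is $a$ itself and the join is immediate. Otherwise both reductions have at least one step and can be presented in head-form as $a \redarrow b_1 \redstar b$ and $a \redarrow c_1 \redstar c$. Local confluence applied to the pair $a \redarrow b_1$, $a \redarrow c_1$ yields some $d$ with $b_1 \redstar d$ and $c_1 \redstar d$. Since $a \redarrow b_1$, the IH applies at $b_1$ to the divergence $b_1 \redstar b$ and $b_1 \redstar d$, producing some $e$ with $b \redstar e$ and $d \redstar e$. A second invocation of the IH at $c_1$, applied to $c_1 \redstar c$ and the concatenation $c_1 \redstar d \redstar e$, yields some $f$ with $c \redstar f$ and $e \redstar f$. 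Then $b \redstar e \redstar f$ and $c \redstar f$ together exhibit $b \join c$.

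The main obstacle I anticipate is a mismatch between the tail-based definition of \lean{Star} shown in the preliminaries and the head-based decomposition required above. Case analysis on $a \redstar b$ using the declared constructors extracts the \emph{last} step, yielding $a \redstar b' \redarrow b$, which is the wrong shape for invoking local confluence at the start of the reduction. I would therefore first develop an auxiliary lemma along the lines of \lean{star_cases_head}, stating that any non-reflexive \lean{Star} decomposes as $r\,a\,b_1$ together with $\mathit{Star}\,r\,b_1\,b$, proved by induction on the tail-style derivation with a companion ``cons'' lemma that rebuilds a \lean{Star} by prepending a single step. A secondary technical point is packaging the well-founded recursion itself: Lean~4 needs an explicit termination measure, and the IH must be stated in a form that is universally quantified over \emph{all} subsequent divergences at the reduct, not a fixed pair, so that both invocations at $b_1$ and $c_1$ type-check. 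Once the head decomposition, \lean{Star} concatenation, and the WF wrapper are in place, the remaining combinator steps---two symmetric uses of the IH and three \lean{Star} appends---are routine.
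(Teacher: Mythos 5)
Your proof is correct and follows exactly the route the paper indicates: well-founded induction on the termination order, peeling the first step off each diverging reduction, closing the peak with local confluence, and then applying the induction hypothesis at the two one-step reducts to assemble the join. The auxiliary head-decomposition lemma you flag for the tail-based \lean{Star} is a real (and standard) engineering necessity, not a gap in the mathematics.
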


The proof uses well-founded induction on the termination order.

\subsection{Hindley-Rosen Lemma}

When combining confluent relations that commute, their union is confluent~\cite{hindley1969}:

\begin{theorem}[Hindley-Rosen]
If $r$ and $s$ are confluent and commute, then $r \cup s$ is confluent.
\end{theorem}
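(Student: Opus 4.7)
The plan is to reduce confluence of $r \cup s$ to a tiling argument that glues together the confluence of $r$, the confluence of $s$, and the hypothesis that $r$ and $s$ commute. Recall that in this setting commutation means: whenever $a \redstar_r b$ and $a \redstar_s c$, there exists $d$ with $b \redstar_s d$ and $c \redstar_r d$. So the three ingredients give us, respectively, a way to close an $(r,r)$-divergence, an $(s,s)$-divergence, and an $(r,s)$-divergence, all inside $(r \cup s)^*$.

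First I would observe that any reduction $a \redstar_{r \cup s} b$ can be viewed as a finite sequence of steps each labelled $r$ or $s$. I would prove the main theorem by double induction on the lengths of the two given reductions $a \redstar_{r \cup s} b$ and $a \redstar_{r \cup s} c$, where the base cases (either side empty) are immediate since one side already witnesses the join. For the inductive step I would peel off the first step of each side, giving $a \to_? a_1 \redstar_{r \cup s} b$ and $a \to_? a_2 \redstar_{r \cup s} c$ with each $?$ being $r$ or $s$. This splits into four sub-cases: $(r,r)$ is closed by confluence of $r$ (applied to the single steps, which lift trivially to $\redstar_r$), $(s,s)$ by confluence of $s$, and the two mixed cases $(r,s)$ and $(s,r)$ by commutation. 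In each sub-case, one closes the immediate diamond to some $d_0$, and then two further invocations of the induction hypothesis tile the remaining two subrectangles until everything meets at a common reduct, which is the desired join.

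In Lean~4 the structural skeleton is cleanest as induction on the \lean{Star} proof of the first reduction with an inner induction on the second, mirroring \lean{confluent_of_diamond} but with three closing lemmas instead of one. I would first package the three closing steps into a single helper \lean{union_strip} that, given $a \to_{r \cup s} a_1$ and $a \redstar_{r \cup s} c$, produces a common reduct via whichever of confluence-of-$r$, confluence-of-$s$, or commutation is applicable; the outer induction then uses \lean{union_strip} in exactly the same pattern as the diamond proof. A small preliminary lemma lifting commutation from single steps of $r$ versus single steps of $s$ to $\redstar_r$ versus $\redstar_s$ may be needed depending on the precise form of the hypothesis, but given that commutation is already assumed at the reflexive-transitive level this is free.

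The main obstacle I anticipate is bureaucratic rather than mathematical: each step of a $(r \cup s)^*$ reduction must be discriminated into its $r$-or-$s$ origin, and the resulting four-way case split has to be dispatched without losing track of which subreduct lives in $\redstar_r$, which in $\redstar_s$, and which in $\redstar_{r \cup s}$. Coercions from $\redstar_r$ and $\redstar_s$ into $\redstar_{r \cup s}$ (and compatibility of \lean{Star} with union inclusion) will be invoked constantly, and a few small \lean{Star.trans}-style lemmas for $r \cup s$ should be proved up front so that the main induction reads as transparently as the diamond-based confluence proof.
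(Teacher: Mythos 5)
Your ingredients are the right ones---closing $(r,r)$-corners by confluence of $r$, $(s,s)$-corners by confluence of $s$, and mixed corners by commutation---but the induction scheme you propose has a genuine well-foundedness gap. You peel off \emph{single} steps of $r \cup s$ and close the resulting corner, but all three closing lemmas return \emph{multi-step} joins: from $a \to a_1$ and $a \to a_2$ you obtain $a_1 \redstar d_0$ and $a_2 \redstar d_0$ of unbounded length. The ``two further invocations of the induction hypothesis'' then apply to rectangles whose side lengths are not controlled by the original ones: the first sub-rectangle has sides $(m-1,\ell_1)$ with $\ell_1$ arbitrary, and the second has sides $(n-1,\ell_2)$ with $\ell_2$ arbitrary, so neither the sum of lengths nor the lexicographic pair decreases in both calls. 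This is exactly the obstruction that makes the naive tiling proof of ``local confluence implies confluence'' fail (Kleene's counterexample), and your \lean{union_strip} suffers from the same problem: a strip of width one step widens into a strip of arbitrary width after one corner is closed, so the recursion does not terminate on any evident measure.

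The standard repair---and the route the framework is set up for---is to change the granularity of the steps rather than the shape of the induction. Observe that confluence of $r$ says precisely that the relation $\redstar_r$, viewed as a \emph{single-step} relation in its own right, satisfies the strict one-step diamond: two diverging $\redstar_r$-steps rejoin in one $\redstar_r$-step each. The same holds for $\redstar_s$, and commutation (in the form you state it) closes the mixed square with one $\redstar_s$-step and one $\redstar_r$-step. Hence the macro-step relation $R = {\redstar_r} \cup {\redstar_s}$ has the genuine one-step diamond property, for which the strip argument \emph{is} well-founded (the strip never widens: one macro-step in yields one macro-step out), so $R$ is confluent by the existing \lean{confluent_of_diamond} machinery. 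Since every $(r \cup s)$-step is an $R$-step and every $R$-step is a $(r\cup s)^*$-reduction, one has $R^* = (r \cup s)^*$, and confluence transfers to $r \cup s$. Your four-way case analysis survives intact as the proof that $R$ has the diamond property; what must change is that the outer induction counts macro-steps, not individual $r$- or $s$-steps.
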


\subsection{Technique Comparison}

\begin{table}[t]
\centering\small
\caption{Comparison of confluence proof techniques}
\label{tab:techniques}
\begin{tabular}{@{}lccc@{}}
\toprule
\textbf{Technique} & \textbf{Precondition} & \textbf{Proof Effort} & \textbf{Applicability} \\
\midrule
Diamond & None & Define parallel reduction & Non-terminating \\
Newman & Termination & Prove termination + LC & Terminating \\
Hindley-Rosen & Two confluent rels. & Prove commutation & Modular \\
\bottomrule
\end{tabular}
\end{table}

\section{Case Studies}
\label{sec:casestudies}

We instantiate our framework across six systems.

\subsection{Lambda Calculus via Diamond Property}
\label{sec:lambda}

The untyped $\lambda$-calculus~\cite{barendregt1984} with de~Bruijn indices~\cite{debruijn1972}
is our primary case study. Terms are: $M, N ::= \mathtt{var}(n) \mid M\, N \mid \lambda M$.

\paragraph{De~Bruijn Infrastructure.} In de~Bruijn notation, variables are
represented by natural numbers indicating how many binders to cross to reach
the binding site. This eliminates $\alpha$-equivalence but requires careful
bookkeeping via \emph{shifting} (adjusting indices when passing under binders)
and \emph{substitution}:

\begin{lstlisting}
def shift (d : Int) (c : Nat) : Term -> Term
  | var n => var (if n < c then n else n + d)
  | app M N => app (shift d c M) (shift d c N)
  | lam M => lam (shift d (c + 1) M)

def subst (k : Nat) (N : Term) : Term -> Term
  | var n => if n < k then var n
             else if n = k then shift k 0 N
             else var (n - 1)
  | app M1 M2 => app (subst k N M1) (subst k N M2)
  | lam M => lam (subst (k + 1) (shift 1 0 N) M)
\end{lstlisting}

A major contribution is fully proving the substitution lemmas. The key lemmas are:

\begin{theorem}[Shifting Lemmas]
\begin{enumerate}
\item $\uparrow^0_c M = M$ \hfill (identity)
\item $\uparrow^{d_1}_c (\uparrow^{d_2}_c M) = \uparrow^{d_1+d_2}_c M$ \hfill (composition)
\item $\uparrow^{d_1}_{c_1} (\uparrow^{d_2}_{c_2} M) = \uparrow^{d_2}_{c_2+d_1} (\uparrow^{d_1}_{c_1} M)$
      when $c_1 \le c_2$ \hfill (commutation)
\end{enumerate}
\end{theorem}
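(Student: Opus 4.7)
My plan is to prove all three shifting lemmas by structural induction on the term $M$. In each case, the \lstinline|app| case follows immediately from the inductive hypotheses on the two subterms, and the \lstinline|lam| case follows from the inductive hypothesis with a cutoff adjusted by $+1$, since \lstinline|shift| recurses under a binder by incrementing $c$. All the real content therefore lives in the \lstinline|var| case, where \lstinline|shift d c (var n)| reduces to a conditional on whether $n < c$, and I would dispatch it by \lstinline|Nat| comparison together with the tactics \lstinline|omega| and \lstinline|simp [shift]| to discharge the arithmetic obligations.

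For lemma (1), the \lstinline|var| case reduces to showing that \lstinline|var (if n < c then n else n + 0) = var n|, which is immediate in either branch. For lemma (2), the \lstinline|var| case splits on $n < c$ against the inner shift: if $n < c$ then both sides reduce to \lstinline|var n| (using $n < c$ again on the outer shift), and if $n \ge c$ the inner shift produces \lstinline|var (n + d_2)|, after which $n + d_2 \ge c$ so the outer shift yields \lstinline|var (n + d_2 + d_1)|, matching the right-hand side by commutativity of addition. The \lstinline|lam| case applies the inductive hypothesis with cutoff $c + 1$ on both sides.

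For lemma (3), the commutation lemma, the \lstinline|var| case is the main obstacle and requires a three-way case split on $n$ against the two cutoffs $c_1 \le c_2$: either $n < c_1$, or $c_1 \le n < c_2$, or $c_2 \le n$. In the first case both shifts leave $n$ untouched on both sides. In the second, the $c_1$-shift adds $d_1$ producing $n + d_1$, and I must check that $n + d_1 < c_2 + d_1$ so the subsequent $(c_2 + d_1)$-shift leaves it alone, matching the left-hand side where the $c_2$-shift leaves $n$ alone (since $n < c_2$) and then the $c_1$-shift adds $d_1$. In the third case both sides produce $n + d_1 + d_2$, after checking $n + d_1 \ge c_2 + d_1$ and $n + d_2 \ge c_1$. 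The \lstinline|lam| case is then discharged by invoking the inductive hypothesis with cutoffs $c_1 + 1$ and $c_2 + 1$, which still satisfy $c_1 + 1 \le c_2 + 1$, and using the identity $(c_2 + d_1) + 1 = (c_2 + 1) + d_1$ to align the recursive call with the expected form.

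The principal difficulty I anticipate is bookkeeping in the \lstinline|var| case of (3): each branch requires reproving the nested conditionals in \lstinline|shift|'s definition with slightly shifted bounds, and a missed subcase or an off-by-one in the arithmetic relating $c_2$, $c_2 + d_1$, and the $+1$ under $\lambda$ will break the proof. I would mitigate this by first introducing a helper lemma \lstinline|shift_var| that unfolds \lstinline|shift d c (var n)| into an explicit three-case disjunction, then letting \lstinline|omega| handle all of the resulting inequalities uniformly. Because $d$ is an \lstinline|Int| while $n$ and $c$ are \lstinline|Nat|, I also expect one or two coercion obligations that are best handled by the same \lstinline|omega| call rather than manual \lstinline|Nat.cast| rewrites.
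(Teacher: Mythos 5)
Your proof is correct and follows the standard structural-induction argument that the paper's Lean development necessarily uses: the paper itself only states these lemmas and gives no proof details beyond noting they are fully mechanized, and your treatment of the \texttt{var}, \texttt{app}, and \texttt{lam} cases (including the three-way index split for commutation and the $(c_2+d_1)+1 = (c_2+1)+d_1$ realignment under binders) is exactly what that mechanization must contain. The only caveat is that your arithmetic in the \texttt{var} cases of (2) and (3) implicitly assumes the shift amounts $d_i$ are non-negative---a hypothesis the paper's statement also leaves implicit even though \texttt{shift} takes an \texttt{Int}.
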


\begin{theorem}[Substitution Composition]
\[ (M[N])[P] = (\mathtt{subst}\; 1\; (\uparrow^1_0 P)\; M)[N[P]] \]
\end{theorem}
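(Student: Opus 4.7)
The plan is to proceed by structural induction on $M$, writing $M[Q]$ for $\mathtt{subst}\;0\;Q\;M$. The application case $M = M_1\, M_2$ is immediate: the outer substitutions distribute over the constructor and the induction hypotheses on $M_1$ and $M_2$ close the goal.

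The variable case $M = \mathtt{var}(n)$ reduces to a three-way split on $n \in \{0,\,1,\,{\geq}\,2\}$. Each branch is handled by unfolding the \texttt{subst}/\texttt{shift} definitions. When $n = 0$, both sides collapse to $N[P]$ after simplifying a $\mathtt{shift}\;0\;0$ via the identity shift lemma. When $n \geq 2$, both sides reduce to $\mathtt{var}(n-2)$ after the appropriate index decrements on the left and on the right. The delicate subcase is $n = 1$: the left-hand side collapses to $P$ (via two decrements and a trivial shift), whereas the right-hand side becomes a substitution applied to a freshly shifted $P$. Equating the two needs the auxiliary ``substitution cancels a fresh shift'' lemma of the form $\mathtt{subst}\;k\;Q\;(\uparrow^{1}_{k} M) = M$, which has its own short induction on $M$ using only the shift identity and shift commutation facts from the previous theorem.

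The lambda case $M = \lambda M'$ is where the bulk of the work sits. Unfolding the outer and inner substitutions on both sides pushes us under the binder, bumping the cut-offs to $k+1$ and wrapping the substituted arguments in extra $\uparrow^{1}_{0}$ shifts. Before the induction hypothesis on $M'$ can fire, these shifted arguments must be normalised so that the two sides present syntactically identical arguments to $\mathtt{subst}$. The rewrites required are: the shift composition law $\uparrow^{d_1}_{c}\uparrow^{d_2}_{c} = \uparrow^{d_1+d_2}_{c}$; the shift commutation law when $c_1 \le c_2$; and a secondary substitution/shift commutation lemma relating $\uparrow^{1}_{0}(N[P])$ to $(\uparrow^{1}_{0} N)[\uparrow^{1}_{1} P]$. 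Once these rewrites have lined up the arguments, the induction hypothesis on $M'$ closes the goal.

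The main obstacle is precisely this binder case: each side carries several nested shifts at distinct cut-offs, and choosing the correct sequence of shift-commutation and substitution/shift-commutation steps to align them is delicate. In our Lean~4 development this is what makes the proof long (the $\sim$90 lines advertised in Section~\ref{sec:lambda}); the variable and application cases together contribute only a small fraction of that weight, and the whole argument is parametric enough that the auxiliary lemmas can be reused for the later case studies in Section~\ref{sec:stlc}.
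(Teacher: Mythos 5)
Your overall toolbox is right---the shift composition/commutation lemmas, the $\mathtt{subst}\;k\;Q\;(\uparrow^{1}_{k}M)=M$ cancellation, and the $\uparrow^{1}_{0}(N[P])$ versus $(\uparrow^{1}_{0}N)[\cdot]$ commutation are exactly the auxiliary facts this proof consumes, and your variable and application cases are unproblematic. But there is a genuine gap in the lambda case, and it is precisely the point where the paper's proof takes a different, necessary route. You propose a structural induction on $M$ over the statement with \emph{fixed} substitution positions ($\mathtt{subst}\;0$ outside, $\mathtt{subst}\;1$ in the middle). After you push under the binder, the goal becomes an equation between terms of the form $\mathtt{subst}\;1\;(\cdots)\;(\mathtt{subst}\;1\;(\cdots)\;M')$ on the left and $\mathtt{subst}\;1\;(\cdots)\;(\mathtt{subst}\;2\;(\cdots)\;M')$ on the right. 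Your induction hypothesis, however, is only quantified over the substituted terms $N$ and $P$, not over the substitution \emph{indices}; no amount of normalising the shifted arguments with the rewrites you list will turn the indices $1$ and $2$ back into $0$ and $1$. The step ``the induction hypothesis on $M'$ closes the goal'' therefore does not go through. Your own phrasing betrays the problem: you speak of ``bumping the cut-offs to $k+1$'' although the statement you are inducting on has no $k$.

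The paper's proof resolves exactly this by first proving the generalised lemma $\mathtt{subst\_subst\_gen\_full}(\ell,k,j,M,N,P)$, in which the substitution positions ($k$, $k+j+1$) and the accumulated shifts on $N$ and $P$ (governed by the level parameter $\ell$) are all parameters of the induction; the lambda case then applies the hypothesis at $\ell+1$ (and shifted positions), and the stated theorem is the instance $\ell=k=j=0$. To repair your proof you must perform the same move: formulate and prove the parametric statement up front, rather than the fixed-index one. This generalisation is not a cosmetic strengthening---it is the core difficulty of the lemma and the reason the mechanised proof runs to roughly ninety lines.
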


Our proof (709 lines in \texttt{Term.lean}) uses a generalized lemma with a
``level'' parameter $\ell$ tracking nesting depth under binders:
$$\mathtt{subst\_subst\_gen\_full}(\ell, k, j, M, N, P)$$
The base case $\ell = 0$ gives the standard composition lemma; the general
case handles the interaction with shifting under $\lambda$-binders.

\paragraph{Parallel Reduction.} Following Takahashi~\cite{takahashi1995}, we define
parallel reduction $M \parred N$ that contracts any subset of redexes simultaneously:

\begin{lstlisting}
inductive ParRed : Term -> Term -> Prop where
  | var : ParRed (var n) (var n)
  | app : ParRed M M' -> ParRed N N' -> ParRed (app M N) (app M' N')
  | lam : ParRed M M' -> ParRed (lam M) (lam M')
  | beta : ParRed M M' -> ParRed N N' ->
           ParRed (app (lam M) N) (M'[N'])
\end{lstlisting}

The \emph{complete development} $M^*$ contracts \emph{all} redexes:

\begin{lstlisting}
def complete : Term -> Term
  | var n => var n
  | lam M => lam (complete M)
  | app (lam M) N => (complete M)[complete N]
  | app M N => app (complete M) (complete N)
\end{lstlisting}

\begin{theorem}[Takahashi's Method]
$M \parred N \implies N \parred M^*$. Hence parallel reduction has the diamond
property, and $\beta$-reduction is confluent.
\end{theorem}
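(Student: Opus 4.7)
The plan is to attack the triangle lemma $M \parred N \implies N \parred M^*$ by structural induction on the derivation of $M \parred N$, then derive the diamond property and confluence as corollaries via the generic framework from Section~\ref{sec:framework}. Before starting the induction, I would set up two auxiliary lemmas that will be needed repeatedly: (i) parallel reduction is reflexive, i.e.\ $M \parred M$ for every $M$, proved by a straightforward induction on $M$; and (ii) a \emph{substitutivity} lemma stating that if $M \parred M'$ and $N \parred N'$ then $M[N] \parred M'[N']$, and more generally the analogous statement for substitution at arbitrary depth under binders together with its shifting companion ($M \parred M'$ implies $\uparrow^{d}_{c} M \parred \uparrow^{d}_{c} M'$). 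These rely directly on the de~Bruijn infrastructure proved earlier in Section~\ref{sec:lambda}.

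For the triangle lemma itself, the induction has four cases corresponding to the four constructors of \lean{ParRed}. The \emph{var} and \emph{lam} cases are immediate from the definition of \lean{complete} and the inductive hypothesis. The \emph{beta} case, where $\mathtt{app}\,(\lambda M_0)\,N_0 \parred M_0'[N_0']$, uses the fact that $(\mathtt{app}\,(\lambda M_0)\,N_0)^{*} = M_0^{*}[N_0^{*}]$ by definition of \lean{complete}; the IHs give $M_0' \parred M_0^{*}$ and $N_0' \parred N_0^{*}$, and substitutivity closes the goal. The delicate case is \emph{app}: given $\mathtt{app}\,M\,N \parred \mathtt{app}\,M'\,N'$, I must case-split on the syntactic shape of $M$, because \lean{complete} branches on whether the head is a $\lambda$. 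If $M$ is not a $\lambda$-abstraction, the result follows from the IHs and the \lean{app} rule. If $M = \lambda M_0$, then $M^{*}$ triggers the $\beta$-clause of \lean{complete}, so I must invert $M \parred M'$ to obtain $M' = \lambda M_0'$ with $M_0 \parred M_0'$, then apply the \lean{beta} constructor followed by substitutivity.

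Once the triangle lemma is proved, the diamond property for $\parred$ is a one-line corollary: given $M \parred N_1$ and $M \parred N_2$, both $N_1$ and $N_2$ reduce in one parallel step to the common reduct $M^{*}$, so $N_1 \join N_2$. Invoking \lean{confluent_of_diamond} from the framework then yields confluence of $\parred$. To transfer confluence to $\beta$-reduction, I would prove the two standard inclusions $\redarrow_\beta \subseteq \parred$ (single $\beta$-step is a parallel step, by induction on the $\beta$-step derivation combined with reflexivity of $\parred$) and $\parred \subseteq \redstar_\beta$ (parallel reduction factors into a sequence of $\beta$-steps, by induction on $\parred$), giving $(\redarrow_\beta)^{*} = (\parred)^{*}$ as relations; confluence of $\parred$ then transfers directly to confluence of $\redarrow_\beta$.

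The main obstacle will be the substitutivity lemma for parallel reduction, not the triangle induction itself. Because substitution in the de~Bruijn setting interacts with shifting under binders, the statement must be generalised to substitution at an arbitrary index $k$, and the $\lambda$-case of the induction will require exactly the commutation and composition shifting lemmas listed in Section~\ref{sec:lambda}. Getting the generalisation right---so that the IH fires in the $\lambda$-case with the correct shifted arguments---is the piece most likely to need several attempts, and it is also where the earlier investment in a fully proved de~Bruijn layer pays off.
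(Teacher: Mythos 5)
Your proposal is correct and follows essentially the same route as the paper: Takahashi's triangle property $M \parred N \implies N \parred M^{*}$ proved by induction on the derivation of $\parred$ (with reflexivity and a generalized substitutivity lemma for parallel reduction as the key auxiliaries, and a case split on whether the head of an application is a $\lambda$), followed by the diamond property as an immediate corollary and the standard sandwiching $\redarrow_\beta \subseteq {\parred} \subseteq \redstar_\beta$ to transfer confluence to $\beta$-reduction. Your identification of the de~Bruijn substitutivity lemma as the real source of difficulty matches the paper's own emphasis on the substitution-composition infrastructure.
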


\subsection{Combinatory Logic via Diamond Property}

Combinatory logic uses combinators S and K instead of binding:
$M ::= \mathbf{S} \mid \mathbf{K} \mid M\, N$
with rules $\mathbf{K}\, x\, y \redarrow x$ and
$\mathbf{S}\, x\, y\, z \redarrow x\, z\, (y\, z)$.
The parallel reduction technique applies without substitution complexity (285 lines).

\subsection{Term and String Rewriting via Newman's Lemma}

For terminating systems, Newman's lemma provides a simpler path. We demonstrate
with arithmetic expressions ($e ::= 0 \mid 1 \mid e + e \mid e \times e$) and
string rewriting over $\{a, b\}^*$ with idempotency rules $aa \redarrow a$ and
$bb \redarrow b$. Termination is proved via size/length measures; local confluence
via critical pair analysis.

\section{Simply Typed Lambda Calculus}
\label{sec:stlc}

We extend untyped $\lambda$-calculus with simple types:
$A, B ::= \mathtt{base}(n) \mid A \to B$

\subsection{Typing and Subject Reduction}

Typing contexts $\Gamma$ are lists of types, with $\Gamma(n) = A$ meaning the
$n$-th variable has type $A$. The typing judgment $\Gamma \vdash M : A$ is
defined by the usual rules:
\[
\infer[Var]{\Gamma(n) = A}{\Gamma \vdash \mathtt{var}(n) : A}
\quad
\infer[Lam]{A :: \Gamma \vdash M : B}{\Gamma \vdash \lambda M : A \to B}
\quad
\infer[App]{\Gamma \vdash M : A \to B \;\; \Gamma \vdash N : A}{\Gamma \vdash M\, N : B}
\]

\begin{theorem}[\lean{subject_reduction}]
$\Gamma \vdash M : A \land M \redarrow_\beta N \implies \Gamma \vdash N : A$
\end{theorem}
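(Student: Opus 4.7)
The plan is to proceed by induction on the derivation of the single-step reduction $M \redarrow_\beta N$. The reduction relation decomposes into three congruence cases (under application on either side, and under $\lambda$) together with the base case $(\lambda M)\,N \redarrow M[N]$. The congruence cases are essentially bookkeeping: in each one I invert the typing derivation of the compound term, apply the inductive hypothesis to the reducing subterm to obtain a typing for its contractum, and reassemble the derivation via the same \textsf{App} or \textsf{Lam} rule. The real content lives in the $\beta$-case, where I must show that $\Gamma \vdash (\lambda M)\,N : B$ entails $\Gamma \vdash M[N] : B$. Inversion of \textsf{App} and \textsf{Lam} yields $A :: \Gamma \vdash M : B$ together with $\Gamma \vdash N : A$ for some $A$, so the remaining task is a \emph{substitution lemma for typing}.

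Because terms use de~Bruijn indices, both the substitution lemma and its supporting weakening lemma must be stated in a form general enough to handle operations under arbitrarily many binders. First I would prove weakening: if $\Gamma \vdash M : B$ and $\Gamma'$ is obtained by inserting some type $C$ at position $c$ of $\Gamma$, then $\Gamma' \vdash \uparrow^1_c M : B$. This goes by induction on the typing derivation, with the \textsf{Var} rule as the only interesting case (splitting on whether the index lies below the insertion point) and the \textsf{Lam} rule aligning naturally because passing under a binder increments both the context length and the shift cutoff. Second, I would prove the generalized substitution lemma: if $\Gamma \vdash M : B$, the $k$-th entry of $\Gamma$ is $A$, and the context $\Gamma$ with that entry deleted types $N$ at $A$, then the shortened context types $\mathtt{subst}\;k\;N\;M$ at $B$. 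The proof is by induction on the typing derivation of $M$; the variable case splits into the three branches of \texttt{subst}, invoking weakening to lift $N$ past the $k$ binders above it, while the $\lambda$-case invokes the inductive hypothesis with $k$ incremented and $N$ shifted---exactly matching the $\lambda$-branch of \texttt{subst}.

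With both lemmas in place, the $\beta$-case of subject reduction follows by applying substitution at $k = 0$ with $\Gamma' = \Gamma$, which precisely matches the top-level substitution $M[N]$ in the contractum. The main obstacle, in my estimation, is not any individual inductive step but the precise \emph{formulation} of the generalized substitution lemma: getting context indexing, shift cutoffs, and the $k$-argument of \texttt{subst} all to align so that the inductive hypothesis literally applies in the $\lambda$-case requires care, and the variable case is where the definitional interplay between \texttt{shift} and \texttt{subst}---already controlled by the shifting and composition lemmas of Section~\ref{sec:lambda}---does the real work. Given that infrastructure, the remainder of the argument is essentially routine.
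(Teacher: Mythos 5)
Your proposal matches the paper's approach exactly: induction on the reduction step, with the $\beta$-case discharged by a substitution lemma for typing (the paper states precisely this lemma, $\Gamma \vdash N : A \land A :: \Gamma \vdash M : B \implies \Gamma \vdash M[N] : B$), which in turn rests on a de~Bruijn weakening lemma and a suitably generalized statement indexed by the substitution position. Your identification of the generalized formulation of the substitution lemma as the main difficulty is consistent with the paper's emphasis on its fully proved de~Bruijn infrastructure.
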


The proof requires a substitution lemma: if $\Gamma \vdash N : A$ and
$A :: \Gamma \vdash M : B$, then $\Gamma \vdash M[N] : B$.

\subsection{Strong Normalization via Logical Relations}

We prove strong normalization using Tait's method~\cite{tait1967}. The key is a
\emph{reducibility} predicate defined by induction on types:

\begin{lstlisting}
def Reducible : Ty -> Term -> Prop
  | base _, M => SN M
  | arr A B, M => forall N, Reducible A N -> Reducible B (M N)
\end{lstlisting}

The definition for arrow types is the crucial insight: a function is reducible
if applying it to any reducible argument yields a reducible result. This
\emph{semantic} definition enables induction on type structure.

\paragraph{Candidate Properties.} Reducibility satisfies three key properties
that Girard~\cite{girard1989} calls the ``candidat de r\'{e}ductibilit\'{e}'' conditions:

\begin{description}
\item[CR1] $\mathit{Reducible}(A, M) \implies \mathit{SN}(M)$

Reducible terms are strongly normalizing.

\item[CR2] $\mathit{Reducible}(A, M) \land M \redarrow N \implies \mathit{Reducible}(A, N)$

Reducibility is closed under reduction.

\item[CR3] $\mathit{Neutral}(M) \land (\forall N.\, M \redarrow N \implies \mathit{Reducible}(A, N))
           \implies \mathit{Reducible}(A, M)$

A neutral term is reducible if all its reducts are reducible.
\end{description}

Here, $\mathit{Neutral}(M)$ means $M$ is not a redex---i.e., not of the form
$(\lambda M')\, N$. Variables and applications $x\, N$ are neutral.

\paragraph{Fundamental Lemma.} The main lemma states that well-typed terms
are reducible under any reducible substitution:

\begin{theorem}[\lean{fundamental_lemma}]
If $\Gamma \vdash M : A$ and $\sigma$ is a substitution such that
$\mathit{Reducible}(\Gamma(i), \sigma(i))$ for all $i$, then
$\mathit{Reducible}(A, M[\sigma])$.
\end{theorem}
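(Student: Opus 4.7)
The plan is to prove the fundamental lemma by induction on the typing derivation $\Gamma \vdash M : A$. The variable and application cases are routine; the entire difficulty concentrates in the abstraction case, where one must derive reducibility of a term of the form $(\lambda M')\, N$---which is \emph{not} neutral---so CR3 is unavailable and a dedicated head-expansion argument is required.

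For the \textbf{variable} case, $M = \mathtt{var}(n)$ with $\Gamma(n) = A$, so $M[\sigma] = \sigma(n)$ is reducible at $A$ by the hypothesis on $\sigma$. For the \textbf{application} case, $M = M_1\, M_2$ with $\Gamma \vdash M_1 : A' \to A$ and $\Gamma \vdash M_2 : A'$, the two inductive hypotheses yield $\mathit{Reducible}(A' \to A, M_1[\sigma])$ and $\mathit{Reducible}(A', M_2[\sigma])$; unfolding the arrow-type definition on the former immediately gives $\mathit{Reducible}(A, (M_1\, M_2)[\sigma])$.

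For the \textbf{abstraction} case, $M = \lambda M'$ with $A :: \Gamma \vdash M' : B$, I would unfold the arrow-type reducibility goal: given any $N$ with $\mathit{Reducible}(A, N)$, show $\mathit{Reducible}(B, (\lambda M')[\sigma]\, N)$. The substitution-composition lemma from Section~\ref{sec:lambda} identifies the $\beta$-contractum of this application with $M'[N :: \sigma]$, where $N :: \sigma$ places $N$ at index $0$ above a shift of $\sigma$. Because $N$ is reducible at $A$ and each $\sigma(i)$ is reducible at $\Gamma(i)$, the extended substitution satisfies the fundamental-lemma premise over $A :: \Gamma$, so the inductive hypothesis on $M'$ yields $\mathit{Reducible}(B, M'[N :: \sigma])$.

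The hard part is then bridging from reducibility of the $\beta$-contractum to reducibility of the application itself. I would isolate this step as a \emph{head-expansion lemma}: if $\mathit{Reducible}(B, M_0[N_0/0])$ and both $M_0$ and $N_0$ are strongly normalizing, then $\mathit{Reducible}(B, (\lambda M_0)\, N_0)$. The two SN side-conditions reduce to CR1---directly for $N_0$, and for $M_0$ from the observation that any infinite reduction inside $M_0$ would lift to one inside $M_0[N_0/0]$, which is SN as a reducible term. I would then prove the head-expansion lemma by induction on the type $B$ with an inner well-founded induction on the pair $(M_0, N_0)$ under the SN ordering: the one-step reducts of $(\lambda M_0)\, N_0$ are the $\beta$-contractum (reducible by assumption), or $(\lambda M_0')\, N_0$, or $(\lambda M_0)\, N_0'$, and CR2 transports reducibility of $M_0[N_0/0]$ along any inner reduction so that the inner induction hypothesis fires at a strictly smaller measure. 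The outer induction on $B$ is what handles the non-neutrality obstacle: at base types the goal collapses to SN, while at arrow types it unfolds to a further application that must itself be discharged by a recursive invocation of head-expansion at a smaller type.
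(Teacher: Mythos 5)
Your overall strategy---induction on the typing derivation, with the variable and application cases immediate and all the work concentrated in the abstraction case via a head-expansion lemma proved by an outer induction on the type and an inner well-founded induction on the SN ordering---is exactly the decomposition the paper's machinery is built for, and your diagnosis that CR3 cannot be applied to $(\lambda M_0)\,N_0$ is correct under this paper's strict notion of neutrality (a $\beta$-redex is \emph{not} neutral here, unlike in Girard's formulation where every application is neutral). The bookkeeping in the abstraction case (extending $\sigma$ with $N$ at index $0$, justified by the substitution-composition lemma) and the derivation of the SN side-conditions from CR1 (directly for $N_0$, and for $M_0$ by lifting any infinite reduction in $M_0$ into one in $M_0[N_0]$) are also right.

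There is, however, one step that fails as written: in the arrow case of your head-expansion lemma, the goal after unfolding is $\mathit{Reducible}(C_2, ((\lambda M_0)\,N_0)\,P)$, and you propose to discharge it ``by a recursive invocation of head-expansion at a smaller type.'' But the head-expansion lemma, as you state it, applies only to terms of the shape $(\lambda M_1)\,N_1$, and $((\lambda M_0)\,N_0)\,P$ is not of that shape---its head redex is buried one application deeper, so the outer application is not a $\beta$-redex and the recursion has nothing to bite on. The correct move is the one the paper itself spotlights for the analogous \texttt{case} construct: $((\lambda M_0)\,N_0)\,P$ \emph{is} neutral under the paper's definition (the function position is an application, not a $\lambda$), so CR3 applies to it directly; its reducts are $(M_0[N_0])\,P$ (reducible because $M_0[N_0]$ is reducible at $C_1 \to C_2$ and $P$ at $C_1$) together with reducts obtained by contracting inside $M_0$, $N_0$, or $P$, which are handled by the well-founded induction on $\mathit{SN}(M_0)$, $\mathit{SN}(N_0)$, $\mathit{SN}(P)$ with CR2 transporting the hypotheses. (Alternatively, you could generalize the head-expansion lemma to applied spines $(\lambda M_0)\,N_0\,\vec{P}$.) With that repair your outline matches the intended proof; the product and sum result types require the same neutrality observation for $\mathtt{fst}$, $\mathtt{snd}$, and the reduction-tracking clauses.
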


\begin{theorem}[\lean{strong_normalization}]
$\Gamma \vdash M : A \implies \mathit{SN}(M)$
\end{theorem}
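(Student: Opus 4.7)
The plan is to derive strong normalization as a direct corollary of the fundamental lemma, by instantiating it with the identity substitution. Concretely, I would first show that $M[\sigma_{\mathrm{id}}] = M$, where $\sigma_{\mathrm{id}}(i) = \mathtt{var}(i)$ is the identity substitution. This is a routine structural induction on $M$ using the de~Bruijn substitution infrastructure already developed in Section~\ref{sec:lambda}; the only non-trivial case is $\lambda M$, where one must verify that lifting the identity substitution under a binder still behaves as the identity.

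The heart of the argument is to prove that $\sigma_{\mathrm{id}}$ is a \emph{reducible substitution} for $\Gamma$, i.e.\ that $\mathit{Reducible}(\Gamma(i), \mathtt{var}(i))$ holds for every $i$. For this, I would invoke CR3: variables are neutral (they are not of the form $(\lambda M')\,N$), and moreover $\mathtt{var}(i)$ has \emph{no} $\beta$-reducts at all, so the hypothesis $\forall N.\, \mathtt{var}(i) \redarrow N \implies \mathit{Reducible}(A, N)$ holds vacuously. Hence CR3 yields $\mathit{Reducible}(A, \mathtt{var}(i))$ for any type $A$, and in particular for $A = \Gamma(i)$.

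With these two pieces in place, the theorem follows in three lines: from $\Gamma \vdash M : A$ and the reducibility of $\sigma_{\mathrm{id}}$, the fundamental lemma gives $\mathit{Reducible}(A, M[\sigma_{\mathrm{id}}])$; rewriting by the identity-substitution equation gives $\mathit{Reducible}(A, M)$; and finally CR1 yields $\mathit{SN}(M)$.

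The only real obstacle is the proof that variables are reducible at every type, which is conceptually the base case of Girard's reducibility method and the reason CR3 is formulated with a neutrality side-condition in the first place. In the Lean~4 development I would expect two mechanization nuisances: making sure the ``no reducts'' fact for $\mathtt{var}(i)$ is available as a lemma (by inversion on $\mathit{BetaStep}$), and handling the identity-substitution lemma under $\lambda$-binders, where the shifted identity substitution must be proved pointwise equal to the identity. Both are mechanical but require the substitution and shifting lemmas of Section~\ref{sec:lambda}; once they are in place, the theorem itself is essentially a one-screen proof.
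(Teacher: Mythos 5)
Your proposal matches the paper's proof exactly: the paper likewise instantiates the fundamental lemma with the identity substitution, justifies reducibility of variables via CR3 (neutrality plus vacuous reduct condition), and extracts $\mathit{SN}(M)$ by CR1. The additional details you flag---the identity-substitution equation and the no-reducts inversion lemma for variables---are the standard mechanization obligations implicit in the paper's one-line argument, so the approaches coincide.
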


\begin{proof}
Apply the fundamental lemma with the identity substitution (variables are
reducible by CR3), then extract SN by CR1.
\end{proof}

\section{Extended STLC with Products and Sums}
\label{sec:stlcext}

A significant extension is STLC with product types ($A \times B$) and sum types
($A + B$). This extension is standard in programming language theory but
requires substantial additional machinery in the strong normalization proof.
The STLCext module is our largest (3,828 lines, 155 theorems), reflecting this complexity.

\subsection{Extended Types and Terms}

Types are extended to: $A, B ::= \mathtt{base}(n) \mid A \to B \mid A \times B \mid A + B$

Terms include pairs, projections, injections, and case analysis:
\begin{align*}
M, N ::=\; & \mathtt{var}(n) \mid \lambda M \mid M\, N \\
\mid\; & (M, N) \mid \mathtt{fst}\, M \mid \mathtt{snd}\, M \\
\mid\; & \mathtt{inl}\, M \mid \mathtt{inr}\, M \mid \mathtt{case}\, M\, N_1\, N_2
\end{align*}

The new typing rules are standard:
\[
\infer[Pair]{\Gamma \vdash M : A \quad \Gamma \vdash N : B}{\Gamma \vdash (M, N) : A \times B}
\quad
\infer[Fst]{\Gamma \vdash M : A \times B}{\Gamma \vdash \mathtt{fst}\, M : A}
\quad
\infer[Snd]{\Gamma \vdash M : A \times B}{\Gamma \vdash \mathtt{snd}\, M : B}
\]
\[
\infer[Inl]{\Gamma \vdash M : A}{\Gamma \vdash \mathtt{inl}\, M : A + B}
\qquad
\infer[Inr]{\Gamma \vdash M : B}{\Gamma \vdash \mathtt{inr}\, M : A + B}
\]
\[
\infer[Case]{\Gamma \vdash M : A{+}B \;\; A{::}\Gamma \vdash N_1 : C \;\; B{::}\Gamma \vdash N_2 : C}
            {\Gamma \vdash \mathtt{case}\, M\, N_1\, N_2 : C}
\]

\subsection{Reduction Rules}

Beyond $\beta$-reduction, we add:
\begin{align*}
\mathtt{fst}\,(M, N) &\redarrow M &
\mathtt{snd}\,(M, N) &\redarrow N \\
\mathtt{case}\,(\mathtt{inl}\, V)\, N_1\, N_2 &\redarrow N_1[V] &
\mathtt{case}\,(\mathtt{inr}\, V)\, N_1\, N_2 &\redarrow N_2[V]
\end{align*}

Note that case analysis binds the injected value: the branches $N_1$ and $N_2$
have an additional free variable (index 0) representing the scrutinee value.

\subsection{Reducibility for Products and Sums}

The key challenge is extending the reducibility predicate. For products, we
use a \emph{projection-based} definition; for sums, we track what values the
term may reduce to:

\begin{lstlisting}
def Reducible : Ty -> Term -> Prop
  | base _, M => SN M
  | arr A B, M => forall N, Reducible A N -> Reducible B (M N)
  | prod A B, M => Reducible A (fst M) /\ Reducible B (snd M)
  | sum A B, M => SN M /\ (forall V, M ->* inl V -> Reducible A V)
                       /\ (forall V, M ->* inr V -> Reducible B V)
\end{lstlisting}

The product case says $M$ is reducible at $A \times B$ iff both projections
are reducible. This is well-defined because $\mathtt{fst}$ and $\mathtt{snd}$
are smaller terms in the structural sense.

The sum case requires: (1) $M$ is SN; (2) if $M$ reduces to $\mathtt{inl}\, V$,
then $V$ is reducible at $A$; (3) similarly for $\mathtt{inr}$. This ensures
that case analysis on $M$ produces reducible results.

\subsection{The Case Construct Challenge}

The most complex proof is showing $\mathtt{case}\, M\, N_1\, N_2$ is reducible.
Unlike applications or projections, the \texttt{case} form has \emph{three}
subterms that can reduce independently, and its neutrality depends on $M$:

\begin{lstlisting}
def IsNeutral : Term -> Prop
  | var _ => True
  | app M _ => not (isLam M)
  | fst M => not (isPair M)
  | snd M => not (isPair M)
  | case M _ _ => not (isInl M) /\ not (isInr M)
  | _ => False
\end{lstlisting}

A $\mathtt{case}$ is neutral only when its scrutinee is neither
$\mathtt{inl}$ nor $\mathtt{inr}$. This complicates CR3 arguments.

\begin{theorem}[\lean{reducible_case}]
Given:
\begin{itemize}
\item $\mathit{SN}(M)$, $\mathit{SN}(N_1)$, $\mathit{SN}(N_2)$
\item For all $V$: if $M \redstar \mathtt{inl}\, V$ and $\mathit{Reducible}(A, V)$,
      then $\mathit{Reducible}(C, N_1[V])$
\item Similarly for $\mathtt{inr}$ and $N_2$
\end{itemize}
Then $\mathit{Reducible}(C, \mathtt{case}\, M\, N_1\, N_2)$ holds.
\end{theorem}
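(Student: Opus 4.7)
The plan is to perform well-founded induction on the lexicographic triple $(\nu(M), \nu(N_1), \nu(N_2))$, where $\nu(t)$ denotes the maximal reduction length from $t$ (finite by the SN hypotheses). The induction step then splits on whether the head of $M$ is an injection.

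In the first subcase, $M$ is neither $\mathtt{inl}\,V$ nor $\mathtt{inr}\,V$, so $\mathtt{case}\,M\,N_1\,N_2$ is neutral by the \texttt{IsNeutral} definition. I apply CR3 at type $C$, reducing the goal to showing that every one-step reduct is reducible. The reducts are purely internal: $\mathtt{case}\,M'\,N_1\,N_2$ with $M \redarrow M'$, or the analogous reducts under $N_1$ and $N_2$. Each strictly decreases one component of the triple, so the outer IH applies once the hypotheses are re-established on the reduct: any $M' \redstar \mathtt{inl}\,V$ extends to $M \redstar \mathtt{inl}\,V$ by prepending the step (so the original premise still supplies $\mathit{Reducible}(C, N_1[V])$), and reducibility of $N_1'[V]$ follows from reducibility of $N_1[V]$ via CR2 on the substituted term.

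In the second subcase, $M = \mathtt{inl}\,V$ (symmetric for $\mathtt{inr}$), the case expression is \emph{not} neutral, so CR3 is unavailable. I would discharge this via a dedicated head-expansion lemma: assuming $V, N_1, N_2$ are SN and $\mathit{Reducible}(C, N_1[V])$ (obtained from the main hypothesis instantiated at $V$, in the context where $V$ is reducible at $A$), show $\mathit{Reducible}(C, \mathtt{case}\,(\mathtt{inl}\,V)\,N_1\,N_2)$ by a nested induction on $\nu(V)+\nu(N_1)+\nu(N_2)$. The $\iota$-reduct is $N_1[V]$, reducible by assumption; every internal reduct has strictly smaller measure and falls under the nested IH, with CR2 converting reducibility of $N_1[V]$ into reducibility of $N_1'[V]$ or $N_1[V']$ as needed.

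The main obstacle will be this second subcase. Non-neutrality of $\mathtt{case}\,(\mathtt{inl}\,V)\,N_1\,N_2$ forces a bespoke expansion lemma rather than a uniform CR3 application, and the situation is aggravated when $C$ is itself a product or sum type: the reducibility predicate for compound $C$ is no longer characterized purely by closure under reducts, so one must additionally bound the $\mathtt{inl}\,W$, $\mathtt{inr}\,W$, or pair-shaped normal forms that the case expression may reach and verify reducibility of each component. I expect this to be factored in the Lean~4 development as a separate head-expansion lemma, mirroring the way a \texttt{reducible\_beta}-style auxiliary typically supports the $\lambda$-application case.
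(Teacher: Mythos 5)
Your proposal is broadly sound but takes a genuinely different decomposition from the paper, and the place you flag as ``the main obstacle'' is precisely the point the paper's organization is designed to dissolve. You case-split first on the shape of the scrutinee $M$ (neutral vs.\ injection) and work at the result type $C$ uniformly: CR3 at $C$ when the \texttt{case} is neutral, and a bespoke head-expansion lemma when $M = \mathtt{inl}\,V$. The paper instead case-splits first on the \emph{result type} $C$. For $C = C_1 \to C_2$ it unfolds the arrow clause of $\mathit{Reducible}$ and shows $(\mathtt{case}\,M\,N_1\,N_2)\,P$ reducible at $C_2$; for $C = C_1 \times C_2$ it shows $\mathtt{fst}\,(\mathtt{case}\,M\,N_1\,N_2)$ and $\mathtt{snd}\,(\ldots)$ reducible. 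The payoff is that these eliminator-wrapped terms are \emph{always} neutral---their head constructor is an application or projection whose subject is a \texttt{case}, never a $\lambda$ or a pair---so CR3 applies uniformly whether or not $M$ is an injection, and no separate non-neutral expansion lemma is needed for higher result types. Only the base and sum cases then require the direct SN-style induction (triple nested induction on $\mathit{SN}(M)$, $\mathit{SN}(N_1)$, $\mathit{SN}(N_2)$, plus tracking of $\redstar \mathtt{inl}\,V$ for sums), and there the $\iota$-reduct $N_1[V]$ is handled by the hypothesis via CR1/CR2 exactly as you describe. Your route can be completed, but to discharge your head-expansion lemma at compound $C$ you will in effect have to rediscover the same wrapping trick (e.g.\ showing $(\mathtt{case}\,(\mathtt{inl}\,V)\,N_1\,N_2)\,P$ is neutral with key reduct $(N_1[V])\,P$); your suggestion of ``bounding the pair-shaped normal forms'' for product $C$ is not the right fix---the projection-based definition of reducibility at products makes the projections neutral and lets CR3 do the work directly. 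In short: same ingredients, but the paper's type-first case analysis turns your hardest subcase into a routine CR3 application, at the cost of a somewhat longer (337-line) case analysis over $C$.
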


The proof (337 lines) proceeds by case analysis on the result type $C$:

\paragraph{Base type.} We must show $\mathit{SN}(\mathtt{case}\, M\, N_1\, N_2)$.
The proof uses triple nested induction on $\mathit{SN}(M)$, $\mathit{SN}(N_1)$,
and $\mathit{SN}(N_2)$, analyzing each possible reduction.

\paragraph{Arrow type.} We must show reducibility at $C_1 \to C_2$. The key:
$(\mathtt{case}\, M\, N_1\, N_2)\, P$ is \emph{always neutral}, regardless of
whether $M$ is an injection. The outermost constructor is application, whose
head is \texttt{case}, not a $\lambda$. Thus CR3 applies directly.

\paragraph{Product type.} Similarly, $\mathtt{fst}\,(\mathtt{case}\, M\, N_1\, N_2)$
and $\mathtt{snd}\,(\mathtt{case}\, M\, N_1\, N_2)$ are always neutral because
their head is $\mathtt{case}$, not a pair.

\paragraph{Sum type.} We show SN (as for base type) plus track multi-step
reductions. If $\mathtt{case}\, M\, N_1\, N_2 \redstar \mathtt{inl}\, V$, this
can only happen if $M \redstar \mathtt{inl}\, W$ for some $W$, making
$N_1[W] \redstar \mathtt{inl}\, V$. By hypothesis, $N_1[W]$ is reducible,
so $V$ is reducible as required.

\begin{theorem}[\lean{strong_normalization} for STLCext]
$\Gamma \vdash M : A \implies \mathit{SN}(M)$ for the extended system.
\end{theorem}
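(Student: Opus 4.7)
The plan is to extend the Tait--Girard schema of Section~\ref{sec:stlc} essentially unchanged: prove a \emph{fundamental lemma} stating that every well-typed term is reducible under any reducible substitution, specialise it to the identity substitution (each variable is reducible because it is neutral with no reducts, so CR3 applies), and extract $\mathit{SN}$ via CR1. The real work is concentrated in (i) checking that CR1--CR3 survive the new clauses of $\mathit{Reducible}$ and (ii) discharging one reducibility-introduction lemma per new typing rule during the induction on the derivation.

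For (i), CR1 at sums is built into the definition, while at products it follows because $\mathit{SN}(\mathtt{fst}\,M)$ reflects back to $\mathit{SN}(M)$ (any infinite reduction inside $M$ lifts through $\mathtt{fst}$). CR2 is routine by transitivity of $\redstar$ together with reduction under $\mathtt{fst}$, $\mathtt{snd}$, $\mathtt{inl}$, $\mathtt{inr}$. CR3 at products reduces, via the two projections, to CR3 at $A$ and $B$; at sums it exploits the fact that a neutral $M$ cannot step to $\mathtt{inl}\,V$ or $\mathtt{inr}\,V$ in a single step, so the injection clauses of the sum definition follow by induction on $\redstar$ using CR2.

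For (ii), the Fst and Snd rules are immediate unfoldings of the product clause, and the Case rule is exactly \lean{reducible_case} applied to the substituted subterms. The Inl and Inr rules need only compact introduction lemmas: $\mathit{SN}(\mathtt{inl}\,V) \Leftrightarrow \mathit{SN}(V)$, any reduct $\mathtt{inl}\,V \redstar \mathtt{inl}\,W$ forces $V \redstar W$ so that CR2 yields $\mathit{Reducible}(A,W)$, and the $\mathtt{inr}$-reduct clause is vacuous. The Pair rule is the one case that cannot be settled by CR3, since $\mathtt{fst}\,(M,N)$ is non-neutral; here I would prove a head-expansion lemma stating that if $M$ is reducible and $N$ is $\mathit{SN}$ (and symmetrically for $\mathtt{snd}$), then $\mathtt{fst}\,(M,N)$ is reducible, via nested induction on $\mathit{SN}(M)$ and $\mathit{SN}(N)$ with a case analysis on every possible one-step reduct (into $M$, into $\mathtt{fst}\,(M',N)$, or into $\mathtt{fst}\,(M,N')$).

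The main obstacle, consistent with the weight of \lean{reducible_case}, is the combination of this head-expansion argument for the pair-projection redexes with the multi-step tracking through Case inside the sum-typed branches of the fundamental lemma. Concretely, the Case step requires showing that whenever the substituted scrutinee $M[\sigma]$ reduces to $\mathtt{inl}\,W$, the branch instantiation $N_1[\sigma, W]$ is again reducible; this combines the outer reducibility of $\sigma$ with the reducibility of $W$ extracted from the sum clause, and it leans on the substitution composition lemma of Section~\ref{sec:lambda} to rewrite the iterated substitution into the form demanded by \lean{reducible_case}. Once those introduction lemmas are in place, the fundamental lemma goes through by straightforward induction on the typing derivation, and the theorem follows by instantiation at the identity substitution and CR1.
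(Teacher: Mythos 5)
Your proposal matches the paper's approach: the paper likewise proves strong normalization for STLCext by extending the reducibility predicate to products and sums, re-establishing CR1--CR3, handling the Case rule via \lean{reducible_case}, and then running the fundamental lemma with the identity substitution followed by CR1. The additional details you supply (the head-expansion lemma for \(\mathtt{fst}\,(M,N)\), the reflection of \(\mathit{SN}\) through projections, and the multi-step tracking for injections) are consistent with the machinery the paper describes and its appendix sketches.
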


\subsection{Progress}

We also prove progress for closed well-typed terms:

\begin{theorem}[\lean{progress}]
$\varnothing \vdash M : A \implies \mathit{IsValue}(M) \lor \exists N.\, M \redarrow N$
\end{theorem}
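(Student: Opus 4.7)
The plan is to prove progress by structural induction on the typing derivation $\varnothing \vdash M : A$, with the key auxiliary ingredient being a \emph{canonical forms} lemma that characterizes closed well-typed values at each type constructor. Specifically, I will first establish: (i) if $\varnothing \vdash V : A \to B$ and \lean{IsValue}$(V)$, then $V = \lambda M'$; (ii) if $\varnothing \vdash V : A \times B$ and \lean{IsValue}$(V)$, then $V = (V_1, V_2)$; and (iii) if $\varnothing \vdash V : A + B$ and \lean{IsValue}$(V)$, then $V = \mathtt{inl}\, V'$ or $V = \mathtt{inr}\, V'$. Each case follows by inversion on the typing derivation combined with case analysis on the \lean{IsValue} predicate, which rules out the impossible shapes (e.g.\ a closed variable, or a neutral application at arrow type).

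With canonical forms in hand, the induction proceeds rule by rule. The \textsf{Var} case is vacuous because the context is empty. The introduction forms \textsf{Lam}, \textsf{Pair}, \textsf{Inl}, \textsf{Inr} yield values immediately, so the left disjunct holds. The elimination cases follow a uniform pattern: apply the induction hypothesis to the principal subterm (the function in \textsf{App}, the pair in \textsf{Fst}/\textsf{Snd}, the scrutinee in \textsf{Case}); if it steps to some $N'$, lift that step through the appropriate congruence rule to produce a reduction for the whole term; if it is a value, invoke canonical forms to expose its shape and fire the corresponding $\beta$, projection, or case-on-injection rule.

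The main obstacle is the \textsf{Case} branch, which requires the most delicate reasoning and depends crucially on canonical forms at sum types. By the induction hypothesis, the scrutinee $M$ is either a value or reduces. If it reduces, we use a congruence rule in the scrutinee position of \texttt{case}. If it is a value, canonical forms at $A + B$ yields $M = \mathtt{inl}\, V'$ or $\mathtt{inr}\, V'$, and we fire $\mathtt{case}\,(\mathtt{inl}\, V')\, N_1\, N_2 \redarrow N_1[V']$ (and symmetrically for $\mathtt{inr}$). A subtle point worth double-checking in Lean is that the branches $N_1, N_2$ are not typed under $\varnothing$ but under $A :: \varnothing$ and $B :: \varnothing$ respectively, so one cannot recurse on them; this is fine because we only need a single reduction step at the top level, not progress of the branches. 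Beyond this, the argument is a routine pattern match with no deep rewriting theory, but the number of constructors and the need to keep the \lean{IsValue} and canonical-forms case splits in lock-step make careful enumeration the practical engineering challenge.
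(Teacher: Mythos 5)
Your overall strategy---induction on the typing derivation plus a canonical forms lemma for arrow, product, and sum types---is exactly the standard argument and matches what the paper does (it analyzes the typing derivation and shows non-value closed terms have a redex). The \textsf{Case} analysis you single out is handled correctly, including the observation that the branches are typed in extended contexts and need no recursive call.

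There is one concrete slip: you claim the introduction forms \textsf{Pair}, \textsf{Inl}, \textsf{Inr} ``yield values immediately, so the left disjunct holds.'' Under the paper's definition, values are $\lambda$-abstractions, \emph{pairs of values}, and \emph{injections of values}, and the reduction relation includes the congruence rules $(\mathsf{PairL})$, $(\mathsf{PairR})$, $(\mathsf{Inl})$, $(\mathsf{Inr})$. So $(M,N)$ is not a value unless both $M$ and $N$ are, and $\mathtt{inl}\,M$ is not a value unless $M$ is; the left disjunct does not hold outright. These cases must also invoke the induction hypothesis on the subterms: if a subterm steps, lift the step through the corresponding congruence rule; only when all subterms are values do you conclude \lean{IsValue}. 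The fix is routine and does not disturb the rest of the argument, but as written the \textsf{Pair}/\textsf{Inl}/\textsf{Inr} cases would fail in Lean.
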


Values include $\lambda$-abstractions, pairs of values, and injections of values.
The proof analyzes the typing derivation and shows that non-value well-typed
closed terms always have a redex.

\section{Quantitative Summary}
\label{sec:summary}

\begin{table}[t]
\centering
\caption{Library statistics by module}
\label{tab:stats}
\begin{tabular}{@{}lrrr@{}}
\toprule
\textbf{Module} & \textbf{Lines} & \textbf{Theorems} & \textbf{Technique} \\
\midrule
Rewriting (generic) & 1,301 & 45 & --- \\
Lambda calculus & 1,498 & 89 & Diamond property \\
Combinatory logic & 584 & 42 & Diamond property \\
Term rewriting & 428 & 31 & Newman's lemma \\
String rewriting & 776 & 48 & Newman's lemma \\
STLC & 1,792 & 87 & Logical relations \\
STLCext (products + sums) & 3,828 & 155 & Logical relations \\
\midrule
\textbf{Total} & \textbf{10,367} & \textbf{497} & --- \\
\bottomrule
\end{tabular}
\end{table}

Table~\ref{tab:stats} summarizes the library. All 497 theorems are fully proved
with zero \lean{sorry} placeholders or axioms. The STLCext module is the largest,
reflecting the complexity of strong normalization with products and sums.

\section{Related Work}
\label{sec:related}

\paragraph{CoLoR.} The Coq library on rewriting and termination~\cite{blanqui2006color}
is the most comprehensive formalization of term rewriting in a proof assistant.
It includes termination orderings, polynomial interpretations, and dependency
pairs. Our work differs in language (Lean~4), scope (we focus on confluence
techniques plus strong normalization rather than termination), and the inclusion
of complete de~Bruijn proofs without axiomatization.

\paragraph{Isabelle Formalizations.} Nipkow~\cite{nipkow2001} formalized multiple
Church-Rosser proofs in Isabelle/HOL, comparing parallel reduction, residuals,
and complete developments. Our parallel reduction approach follows similar lines.
The Nominal Isabelle framework provides elegant binder handling but requires
specialized infrastructure. We demonstrate that de~Bruijn indices, while
requiring more lemmas, can be completely formalized.

\paragraph{POPLmark Challenge.} The POPLmark challenge~\cite{aydemir2005poplmark}
benchmarked different approaches to binding in mechanized metatheory. Solutions
ranged from named representations to de~Bruijn indices to locally nameless
encodings. Aydemir et al.~\cite{aydemir2008} popularized the locally nameless
approach. Many POPLmark solutions axiomatized substitution lemmas; we prove all
lemmas completely, demonstrating that full formalization is tractable.

\paragraph{Agda Formalizations.} Various Agda developments formalize lambda
calculus with de~Bruijn indices, including strong normalization proofs for
STLC. Our work differs in being a unified framework for multiple techniques
and systems, culminating in the products-and-sums extension.

\paragraph{Software Foundations.} The PLF volume of Software Foundations~\cite{pierce2019sf}
includes strong normalization for STLC in Coq using logical relations.
Our development extends to products and sums, which are not covered there,
and demonstrates the additional complexity this introduces.

\section{Conclusion}
\label{sec:conclusion}

We presented \textsc{Metatheory}, a modular confluence and normalization framework
for Lean~4 featuring three proof techniques across six case studies, culminating
in strong normalization for STLC extended with products and sums. Our fully
mechanized development (10,367 LOC, 497 theorems, 0 axioms) demonstrates that
de~Bruijn infrastructure can be completely proved and that Lean~4 is viable for
programming language metatheory.

\paragraph{Lessons Learned.}
\begin{itemize}
\item \textbf{Technique selection matters}: Diamond property works broadly;
      Newman's lemma is simpler when termination holds.
\item \textbf{De~Bruijn is tractable}: With careful generalization, substitution
      lemmas are provable without axiomatization.
\item \textbf{Sum types are subtle}: The \texttt{case} construct requires
      careful strategies (wrapping in eliminators) for reducibility proofs.
\end{itemize}

\paragraph{Future Work.} We plan to add decreasing diagrams, System F with
parametric polymorphism, and integration with Lean~4's Mathlib.

\paragraph{Availability.} The library is open-source at:
\url{https://github.com/arthuraa/metatheory}

\bibliographystyle{splncs04}
\bibliography{references}

\appendix

\section{De~Bruijn Substitution Lemmas}
\label{app:debruijn}

We provide the complete list of de~Bruijn substitution lemmas proved in our
development. These lemmas are often axiomatized or omitted in formalizations;
we prove all of them completely (709 lines total).

\subsection{Shifting Lemmas}

\begin{lstlisting}
-- Identity: shifting by 0 does nothing
theorem shift_zero : shift 0 c M = M

-- Composition at same cutoff
theorem shift_shift : shift d1 c (shift d2 c M) = shift (d1 + d2) c M

-- Commutation at different cutoffs (when c1 <= c2)
theorem shift_shift_comm :
  shift d1 c1 (shift d2 c2 M) = shift d2 (c2 + d1) (shift d1 c1 M)

-- Special case: shifting by 1 twice
theorem shift_shift_succ :
  shift 1 (c + 1) (shift 1 c M) = shift 2 c M
\end{lstlisting}

\subsection{Shift-Substitution Interaction}

\begin{lstlisting}
-- Key interaction lemma
theorem shift_subst :
  shift d c (subst k N M) =
    subst (k + d) (shift d c N) (shift d (c + 1) M)

-- Substitution after shift cancels
theorem subst_shift_cancel :
  subst k N (shift 1 k M) = M
\end{lstlisting}

\subsection{Substitution Composition}

The main substitution composition lemma and its generalization:

\begin{lstlisting}
-- Generalized version with level parameter
theorem subst_subst_gen_full (l k j : Nat) (M N P : Term) :
  subst k (shift l 0 P)
    (subst (k + j + 1) (shift (k + l + 1) 0 N) M) =
  subst (k + j) (shift l 0 (subst j N P))
    (subst k (shift (l + 1) 0 P) M)

-- Standard composition (l = 0, k = 0, j = 0)
theorem subst_subst : (M[N])[P] = (subst 1 (shift 1 0 P) M)[N[P]]
\end{lstlisting}

\section{CR Properties: Detailed Proofs}
\label{app:cr}

\subsection{CR1: Reducible Implies SN}

\begin{lstlisting}
theorem cr1 (A : Ty) (M : Term) : Reducible A M -> SN M := by
  intro hRed
  induction A generalizing M with
  | base _ => exact hRed
  | arr A B ihA ihB =>
    -- Apply M to a reducible argument (var 0)
    have hVar : Reducible A (var 0) := var_reducible 0 A
    have hApp : Reducible B (app M (var 0)) := hRed (var 0) hVar
    have hSN_app : SN (app M (var 0)) := ihB _ hApp
    exact sn_of_sn_app_var hSN_app
  | prod A B ihA ihB =>
    have hFst, hSnd := hRed
    exact sn_of_sn_fst_snd (ihA _ hFst) (ihB _ hSnd)
  | sum A B _ _ => exact hRed.1
\end{lstlisting}

\subsection{CR3: Neutral Terms}

The CR3 property is most complex for STLCext. We show the key insight for
arrow types:

\begin{lstlisting}
-- When the result type is an arrow, we show app (case M N1 N2) P is reducible
-- Key: app (case ...) P is ALWAYS neutral, regardless of M
theorem reducible_case_arr :
  SN M -> SN N1 -> SN N2 ->
  (forall V, M ->* inl V -> Reducible A V -> Reducible (arr C1 C2) (N1[V])) ->
  (forall V, M ->* inr V -> Reducible B V -> Reducible (arr C1 C2) (N2[V])) ->
  Reducible (arr C1 C2) (case M N1 N2) := by
  intro hSN_M hSN_N1 hSN_N2 hInl hInr
  intro P hP_red
  -- Show: Reducible C2 (app (case M N1 N2) P)
  -- Key insight: app (case M N1 N2) P is ALWAYS neutral!
  -- Because: app has a case as its function, which is not a lambda
  apply cr3_neutral C2 (app (case M N1 N2) P)
  . -- Show all reducts are reducible (by nested induction)
    ...
  . -- Show app (case ...) P is neutral
    exact neutral_app_case M N1 N2 P
\end{lstlisting}

\section{Full Typing Rules for STLCext}
\label{app:typing}

\[
\infer[Var]{\Gamma(n) = A}{\Gamma \vdash \mathtt{var}(n) : A}
\quad
\infer[Lam]{A :: \Gamma \vdash M : B}{\Gamma \vdash \lambda M : A \to B}
\quad
\infer[App]{\Gamma \vdash M : A \to B \quad \Gamma \vdash N : A}
           {\Gamma \vdash M\, N : B}
\]
\[
\infer[Pair]{\Gamma \vdash M : A \quad \Gamma \vdash N : B}
            {\Gamma \vdash (M, N) : A \times B}
\quad
\infer[Fst]{\Gamma \vdash M : A \times B}{\Gamma \vdash \mathtt{fst}\, M : A}
\quad
\infer[Snd]{\Gamma \vdash M : A \times B}{\Gamma \vdash \mathtt{snd}\, M : B}
\]
\[
\infer[Inl]{\Gamma \vdash M : A}{\Gamma \vdash \mathtt{inl}\, M : A + B}
\quad
\infer[Inr]{\Gamma \vdash M : B}{\Gamma \vdash \mathtt{inr}\, M : A + B}
\]
\[
\infer[Case]{\Gamma \vdash M : A{+}B \;\; A{::}\Gamma \vdash N_1 : C \;\; B{::}\Gamma \vdash N_2 : C}
            {\Gamma \vdash \mathtt{case}\, M\, N_1\, N_2 : C}
\]

\section{Reduction Rules for STLCext}
\label{app:reduction}

\paragraph{Computational Rules.}
\begin{align*}
&(\mathsf{Beta}) && (\lambda M)\, N \redarrow M[N] \\
&(\mathsf{FstPair}) && \mathtt{fst}\,(M, N) \redarrow M \\
&(\mathsf{SndPair}) && \mathtt{snd}\,(M, N) \redarrow N \\
&(\mathsf{CaseInl}) && \mathtt{case}\,(\mathtt{inl}\, V)\, N_1\, N_2 \redarrow N_1[V] \\
&(\mathsf{CaseInr}) && \mathtt{case}\,(\mathtt{inr}\, V)\, N_1\, N_2 \redarrow N_2[V]
\end{align*}

\paragraph{Congruence Rules.}
\begin{align*}
&(\mathsf{AppL}) && M \redarrow M' \implies M\, N \redarrow M'\, N \\
&(\mathsf{AppR}) && N \redarrow N' \implies M\, N \redarrow M\, N' \\
&(\mathsf{Lam}) && M \redarrow M' \implies \lambda M \redarrow \lambda M' \\
&(\mathsf{PairL}) && M \redarrow M' \implies (M, N) \redarrow (M', N) \\
&(\mathsf{PairR}) && N \redarrow N' \implies (M, N) \redarrow (M, N') \\
&(\mathsf{Fst}) && M \redarrow M' \implies \mathtt{fst}\, M \redarrow \mathtt{fst}\, M' \\
&(\mathsf{Snd}) && M \redarrow M' \implies \mathtt{snd}\, M \redarrow \mathtt{snd}\, M' \\
&(\mathsf{Inl}) && M \redarrow M' \implies \mathtt{inl}\, M \redarrow \mathtt{inl}\, M' \\
&(\mathsf{Inr}) && M \redarrow M' \implies \mathtt{inr}\, M \redarrow \mathtt{inr}\, M' \\
&(\mathsf{CaseM}) && M \redarrow M' \implies \mathtt{case}\, M\, N_1\, N_2 \redarrow \mathtt{case}\, M'\, N_1\, N_2 \\
&(\mathsf{CaseN_1}) && N_1 \redarrow N_1' \implies \mathtt{case}\, M\, N_1\, N_2 \redarrow \mathtt{case}\, M\, N_1'\, N_2 \\
&(\mathsf{CaseN_2}) && N_2 \redarrow N_2' \implies \mathtt{case}\, M\, N_1\, N_2 \redarrow \mathtt{case}\, M\, N_1\, N_2'
\end{align*}

\end{document}